\theoremstyle{plain}
\newtheorem{theorem}{Theorem}
\newtheorem{observation}[theorem]{Observation}
\newtheorem{lemma}[theorem]{Lemma}
\newtheorem{proposition}[theorem]{Proposition}
\theoremstyle{definition}
\newtheorem{definition}[theorem]{Definition}
\newtheorem{condition}{Condition}
\theoremstyle{remark}
\newtheorem{remark}{Remark}
\newcommand{\wick}[1]{\text{:}#1\text{:}}
\begin{document}

\title{Local Thermal Equilibrium in Quantum Field Theory on Flat and Curved Spacetimes}
\author{Christoph Solveen\\\quad\\ \small Institut f\"ur Theoretische Physik, Universit\"at G\"ottingen,\\ \small Friedrich-Hund-Platz 1, 37077 G\"ottingen, Germany\\
\quad\\ \small Christoph.Solveen@theorie.physik.uni-goettingen.de
\quad}
\maketitle

\begin{abstract}
The existence of local thermal equilibrium (LTE) states for quantum field theory in the sense of Buchholz, Ojima and Roos is discussed in a model-independent setting. It is shown that for spaces of finitely many independent thermal observables there always exist states which are in LTE in any compact region of Minkowski spacetime. Furthermore, LTE states in curved spacetime are discussed and it is observed that the original definition of LTE on curved backgrounds given by Buchholz and Schlemmer needs to be modified. Under an assumption related to certain unboundedness properties of the pointlike thermal observables, existence of states which are in LTE at a given point in curved spacetime is established. The assumption is discussed for the sets of thermal observables for the free scalar field considered by Schlemmer and Verch.
\end{abstract}

\renewcommand\abstractname{Mathematics Subject Classifaction (2010)}

\begin{abstract}
  81T20, 81T28, 82C99
\end{abstract}

\section{Introduction}

In quantum field theory (QFT) it is well known that states which represent ensembles in global thermal equilibrium are to be described with the help of the KMS condition \cite{Haag96}. These states have many nice properties, among them the fact that they may be characterized uniquely with the help of only a few "thermal parameters", for example temperature. Buchholz, Ojima and Roos \cite{BuchOjimRoos02} presented a method to identify non-equilibrium states to which one may still attach at least some of these thermal parameters locally. This is done with the help of so-called thermal observables, the expectation values of which are taken as an indicator of when a state is locally in thermal equilibrium.

While there are interesting examples of local thermal equilibrium (LTE) states for free fields (see \cite{Buch03}, for example), it is useful to study existence of LTE states without referring to any particular model. A first investigation of this matter has been carried out in \cite{BuchOjimRoos02}, where it was shown that for certain spaces of thermal observables there are states which are in LTE at a given point of spacetime. In section \ref{Mink}, we improve on this argument and show that for any compact region $\mathcal{O}$ of Minkowski spacetime there are states which are in LTE in $\mathcal{O}$\,. Thermal parameters attached to these states exhibit non-trivial spacetime dependence.

Making use of locally covariant QFT, Buchholz and Schlemmer \cite{BuchSchl07} described a way to identify LTE states in QFT on curved spacetimes. In section \ref{Curved}, we investigate the structure of the sets of thermal observables in curved spacetime, finding that the sets proposed in \cite{BuchSchl07} are in general too restrictive to allow LTE, which is illustrated in a model considered by Schlemmer and Verch in \cite{SchlVerc08}. We give a modified definition of LTE on curved spacetime and in section \ref{Curved2} we proceed to show that under a natural assumption on the sets of thermal observables there are states which are in LTE at a given point of spacetime. Here, we employ methods completely unrelated to the ones we use in finding LTE states in Minkowski spacetime. In an appendix we sketch how our assumption can be verified in the model considered previously.

We note that in \cite{HollLeil10} Hollands and Leiler discuss the Boltzmann transport equation within QFT. It would be interesting to investigate the relationship of their work with our approach to LTE.

\section{Local Equilibrium on Minkowski Spacetime}\label{Mink}

We begin by recalling the basic notions of LTE for QFT on Minkowski spacetime \cite{BuchOjimRoos02}. The underlying idea is that a state which describes an ensemble in LTE at a given point $x$ should not be distinguishable from some (mixture of) global thermal equilibrium states with regard to measurements performed with ``thermal observables'' localized at $x$. The latter are taken from suitable subsets of the field content of the QFT, whereas the global thermal equilibrium states are identified by use of the KMS condition.

We briefly describe the set-up in order to fix our notation, details can be found in \cite{BuchOjimRoos02}. We assume that the $*$-algebra of observables, $\mathcal{A}$\,, is generated by the smearings of a countable set $\mathbf{F}$ of observable (Hermitean) quantum fields:
\begin{equation}
\mathbf{F}=\bigl\{\phi_{i}\bigr\}_{i\in I}~~~~~\text{($I$ some countable index set)}\,.
\end{equation}
We call $\mathbf{F}$ the \emph{field content}. The fields $\phi_{i}$ can be of arbitrary tensorial nature, which in this section will not be indicated in our notation.  Not all of the quantum fields $\phi_{i}$ have to be ``fundamental''; they may be ``derived`` in the sense that they are (differentiated) normal products of one or more basic fields (think, for example, of the algebra of Wick polynomials of the free scalar field \cite{HollWald01}). The Poincar\'e group acts on $\mathcal{A}$ by automorphisms, which in case of pure translations by $x\in\mathbb{R}^{4}$ we denote by $\alpha_{x}$\,.

We assume that a physically reasonable subset $\mathcal{S}$ of the set of states of $\mathcal{A}$ has been chosen. Since we are interested in measuring pointlike fields, $\mathcal{S}$ has to be such that for all $\omega\in\mathcal{S}$ and all $i\in I$ the distributions $f\mapsto\omega(\phi_{i}(f))$ can be represented by suitable (at least continuous) functions $\omega(\phi_{i}(x))$\,. The point-like fields $\phi_{i}(x)$ can then be understood as linear forms on the linear span of $\mathcal{S}$\,.

The KMS condition singles out states in $\mathcal{S}$ which describe ensembles in global thermal equilibrium in their rest system, the latter being described by a timelike unit vector $e_{0}$\,, at some inverse temperature $\beta$\,. We encode the information about the rest frame and the temperature into a single timelike vector $\beta\cdot e_{0}$, which we denote by $\beta$ once again for brevity. We make two important assumptions on the set of KMS states. Firstly, for each $\beta\in V^{+}$ ($V^{+}$ being the forward light cone), we assume that there is a unique KMS state $\omega_{\beta}$ \footnote{This means that the systems we consider consist of a single phases only.}, implying that these states are isotropic and invariant under translations. Moreover, we assume that the KMS states $\omega_{\beta}$ are weakly continuous in $\beta$, i.e.\ we require that the functions
\begin{equation*}
\beta\mapsto\omega_{\beta}(a)
\end{equation*}
are continuous for all $a\in\mathcal{A}$\,. Thus, mixtures of KMS states may be formed by means of integration with suitable positive normalized measures $\rho$ on $V^{+}$\,:
\begin{equation}\label{ReferenceStates}
 \omega_{\rho}:=\int\,d\rho(\beta)\,\omega_{\beta}\,.
\end{equation}
The set of \emph{thermal reference states}, denoted by $\mathcal{C}$\,, is defined as the collection of all states of this form.

Since by assumption for each $\beta\in V^{+}$ there exists a unique KMS state $\omega_{\beta}$\,, every intensive thermodynamic quantity $F$ attached to these states (thermal energy density, entropy density etc.) can be expressed as a function of $\beta$ alone. These functions $\beta\mapsto F(\beta)$ are called \emph{thermal functions}. They characterize the macroscopic properties of the KMS states $\omega_{\beta}$ completely. Thus we may write $F(\beta)=\omega_{\beta}(F)$ and interpret $F$ as a macroscopic (central) observable. Evaluated in a reference state $\omega_{\rho}\in\mathcal{C}$\,, one obtains the mean value of $F$ with regard to the measure $\rho$\,:
\begin{displaymath}
 \omega_{\rho}(F)=\int\,d\rho(\beta)\,F(\beta)\,.
\end{displaymath}
Hence, thermodynamic quantities may also be attached to the reference states.

As is explained in \cite{BuchOjimRoos02}, not all members of the field content $\mathbf{F}$ are sensitive to thermal properties of the reference states, so one needs to choose a suitable subset of \emph{thermal observables} $\mathbf{T}\subset\mathbf{F}$\,, which we take to include the unit $\mathbbm{1}$\,. The real vector space that is generated by $\mathbf{T}(x)$ will be denoted by $S(x)$\,. Here, the set $\mathbf{T}(x)$ consists of the elements of $\mathbf{T}$\,, evaluated at $x$ in some reference frame.

In order to establish a link between microscopic and macroscopic properties of the reference states, one considers the particular subset of thermal functions which are obtained by evaluating the elements of $\mathbf{T}(x)$ in the KMS states:
\begin{equation}\label{ThermalFunctions}
 \Phi(\beta):=\omega_{\beta}\bigl(\phi(x)\bigr)\,.
\end{equation}
Ideally, $S(x)$ is large enough so that one is able to reproduce (or at least approximate) relevant thermal functions $F(\beta)$ with some $\Phi(\beta)$ as in \eqref{ThermalFunctions}. It follows from our assumptions that the functions $\Phi$ are continuous in $\beta$ \footnote{Take any test function $f$ with $\int f(x)\,\mathrm{d}^4x=1$\,. By assumption, $\omega_{\beta}(\phi(f))$ is continuous in $\beta$\, and by translation invariance of the KMS states, one has $\omega_{\beta}(\phi(f))=\omega_{\beta}(\phi(x))=\Phi(\beta)$\,, hence $\beta\mapsto\Phi(\beta)$ is continuous.}.
\begin{definition}
 A state $\omega\in\mathcal{S}$ is called \emph{$S(x)$-thermal} if there exists a normalized measure $\rho_{x}$\,, supported in $V^{+}$\,, such that (cf.\ \eqref{ReferenceStates})
\begin{equation}\label{LTEstates}
 \omega(\phi(x))=\omega_{\rho_{x}}(\phi(x))
\end{equation}
for all $\phi(x)\in S(x)$.
\end{definition}
Note that the $x$-dependence of the right-hand side of \eqref{LTEstates} lies entirely in the measure $\rho_{x}$, since the reference states in $\mathcal{C}$ are translationally invariant by assumption.

Given an $S(x)$-thermal state $\omega$, one may determine the mean values of the corresponding thermal functions (cf.\ \eqref{ThermalFunctions}) in $\omega$ at the spacetime point $x$\,:
\begin{equation}\label{ThermalFunctionsLTE}
 \omega(\Phi)(x):=\omega(\phi(x))~.
\end{equation}
Due to $S(x)$-thermality, this provides a consistent lift of $\omega$ to the space of thermal functions and thereby opens the possibility to assign thermodynamic quantities to LTE states.

In order to consider states in which the thermal functions can vary over spacetime, one may extend the definition to encompass states that are in LTE in a whole region $\mathcal{O}$ of Minkowski spacetime. For each $x\in\mathcal{O}$\,, let the space $S(x)$ be as described before. A state $\omega$ is called \emph{$S_{\mathcal{O}}$-thermal} if $\omega$ is $S(x)$-thermal for all $x\in\mathcal{O}$\,. The spatio-temporal behaviour of the thermal functions in the state $\omega$ on $\mathcal{O}$ is encoded in the functions $x\mapsto\omega(\Phi)(x)$ obtained via \eqref{ThermalFunctionsLTE}. In \cite{BuchOjimRoos02} it is shown how the microscopic dynamics of the thermal observables may thus give rise to equations governing the spacetime dependence of the macroscopic observables.

\quad

In the following we wish to discuss whether there actually exist any non-trivial LTE states, apart from the obvious ones in $\mathcal{C}$\,. Let $S(0)$ be the space generated by the thermal observables $\mathbf{T}(0)$ at the origin $0$ of Minkowski spacetime. We assume that $\mathbf{T}(0)$ is finite, so that $S(0)$ is finite-dimensional. Then, for each $x$ in some region $\mathcal{O}$\,, $S(x)=\alpha_{x}S(0)$ has the same dimension. For each $x\in\mathcal{O}$ and any compact $B\subset V^{+}$ we introduce the following seminorm on $S(x)$ \footnote{We call each of these seminorms $\tau_{B}$, disregarding on which space $S(x)$ they are defined.}:
\begin{equation}\label{SemiNormTau}
 \tau_{B}\bigl(\phi(x)\bigr):=\sup_{\beta\in B}|\omega_{\beta}\bigl(\phi(x)\bigr)|\,.
\end{equation}
We make use of these maps in:

\begin{proposition}
Let $\mathcal{O}$ be a compact subregion of Minkowski spacetime. Assume that there is some compact $B\subset V^{+}$ such that $\tau_{B}$ constitutes a norm on $S(0)$\,. Then, there exist $S_{\mathcal{O}}$-thermal states in $\mathcal{S}$ which are not in $\mathcal{C}$ (except when $\mathcal{S}=\mathcal{C}$ already).
\end{proposition}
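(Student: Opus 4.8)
The idea is to manufacture $S_{\mathcal O}$-thermal states outside $\mathcal C$ by mixing an arbitrary non-equilibrium state with a cleverly chosen thermal reference state, and to detect non-membership in $\mathcal C$ through the \emph{spatial} behaviour of the thermal functions.

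\emph{Step 1 (reformulation and the r\^ole of the norm hypothesis).} Because the $\omega_\beta$ are translation invariant, for any state $\omega$ and any $x$ the linear form $\phi(0)\mapsto\omega(\alpha_x\phi(0))$ on $S(0)$ is, under the identification $S(x)=\alpha_xS(0)$, exactly the ``thermal data'' of $\omega$ at $x$, and $\omega$ is $S(x)$-thermal precisely when this form lies in the convex set $\mathcal T:=\{\,\phi(0)\mapsto\int d\rho(\beta)\,\omega_\beta(\phi(0)):\rho\in\mathcal M_1(V^+)\,\}\subseteq S(0)^*$ --- which is \emph{the same} set for every $x$. Since $\mathbbm{1}\in\mathbf T$, $\mathcal T$ lies in the affine hyperplane $H=\{f\in S(0)^*:f(\mathbbm{1})=1\}$. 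The hypothesis that $\tau_B$ is a norm on $S(0)$ says exactly that the forms $E(\beta):=\omega_\beta(\cdot)|_{S(0)}$, $\beta\in B$, have trivial common kernel, hence span $S(0)^*$; lying all in $H$, they affinely span $H$, so $\operatorname{conv}(E(B))$ --- compact, since $\beta\mapsto E(\beta)$ is continuous and $B$ compact, and contained in $\mathcal T$ --- has nonempty interior in $H$. Fix a probability measure $\rho_0$ supported in $B$ with $E(\rho_0)\in\operatorname{int}_H\mathcal T$ (e.g.\ a uniform combination of point masses at finitely many $\beta_j\in B$ with $E(\beta_j)$ affinely independent); then $\omega_0:=\omega_{\rho_0}\in\mathcal C$ is $S_{\mathcal O}$-thermal with thermal data the constant function $E(\rho_0)$.

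\emph{Step 2 (the mixture).} If $\mathcal S=\mathcal C$ there is nothing to prove; otherwise fix $\omega_1\in\mathcal S\setminus\mathcal C$. By the regularity assumed of $\mathcal S$, the thermal data $x\mapsto d_1(x)$ of $\omega_1$ is continuous, so $d_1(\mathcal O)$ is compact. Put $\omega_t:=(1-t)\omega_0+t\omega_1\in\mathcal S$; its thermal data at $x$ is $(1-t)E(\rho_0)+t\,d_1(x)$. Since $E(\rho_0)$ is interior to $\mathcal T$ in $H$ and $d_1(\mathcal O)$ is bounded, there is $t_0>0$ with $(1-t)E(\rho_0)+t\,d_1(x)\in\mathcal T$ for all $x\in\mathcal O$ and $0\le t\le t_0$; hence $\omega_t$ is $S_{\mathcal O}$-thermal for $t\in[0,t_0]$.

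\emph{Step 3 (escaping $\mathcal C$) and the main obstacle.} The crucial observation is that every $\omega_\rho\in\mathcal C$ is translation invariant, hence has thermal data constant over spacetime; so if $\omega_t\in\mathcal C$ for some $t>0$, then $x\mapsto(1-t)E(\rho_0)+t\,d_1(x)$ is constant, i.e.\ $d_1$ is constant on $\mathcal O$. Therefore, whenever $\omega_1$ can be chosen in $\mathcal S\setminus\mathcal C$ with $d_1$ non-constant on $\mathcal O$, each $\omega_t$ with $0<t\le t_0$ is an $S_{\mathcal O}$-thermal state in $\mathcal S\setminus\mathcal C$ whose thermal functions genuinely vary over $\mathcal O$, and we are done. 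Using Poincar\'e invariance of $\mathcal S$ one may translate any region on which some state of $\mathcal S\setminus\mathcal C$ has non-constant thermal data onto $\mathcal O$; since Minkowski space is connected, the only obstruction is that \emph{every} state in $\mathcal S\setminus\mathcal C$ has spacetime-constant thermal data $c$. In that residual case, if some such $c\in\mathcal T$ the corresponding state is itself $S_{\mathcal O}$-thermal and not in $\mathcal C$; if $c\notin\mathcal T$ one again forms $\omega_t$ (now with constant thermal data $(1-t)E(\rho_0)+tc\in\operatorname{int}_H\mathcal T$ for small $t$) and must show it is not a reference state --- here one invokes a rigidity property of the family $\{\omega_\rho\}$ (e.g.\ that the assignment $\rho\mapsto\omega_\rho$ is injective on signed measures), choosing $\operatorname{supp}\rho_0$ to avoid the negative part of the signed measure representing $\omega_1$, so that the unique representing measure $(1-t)\rho_0+t\nu$ of $\omega_t$ fails to be positive. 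I expect this last point --- non-membership of the mixture in $\mathcal C$ in the absence of spatially varying non-equilibrium states --- to be the one delicate step; Steps 1--2 (existence and $S_{\mathcal O}$-thermality of the mixture) are straightforward consequences of the nonempty interior supplied by the norm hypothesis together with compactness of $\mathcal O$.
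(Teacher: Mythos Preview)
Your Steps 1--2 are correct and give a genuine alternative to the paper's construction. Both proofs produce the LTE state as a convex combination of an arbitrary $\omega_{1}\in\mathcal S\setminus\mathcal C$ with a suitable reference state, but the mechanisms differ: the paper lifts $\omega_{1}(\cdot)(x)$ to $C^{0}(B)$ by Hahn--Banach, represents it by a signed measure $\sigma_{x}$, expands $\sigma_{x}=\sum a_{i}(x)\sigma^{i}$ in a fixed basis, and then takes $\tau=\sum C_{i}|\sigma^{i}|$ as a single positive measure dominating all $\sigma_{x}$ (so $\rho_{x}=(\sigma_{x}+\tau)/(1+\tau(B))$ is positive for every $x\in\mathcal O$). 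You instead observe that the norm hypothesis forces $\operatorname{conv}\{E(\beta):\beta\in B\}$ to have nonempty interior in the affine hyperplane $H$, pick an interior reference point, and slide towards $\omega_{1}$ by a small $t$ uniformly in $x$ using compactness of $d_{1}(\mathcal O)$. Your route is more overtly finite-dimensional and geometric; the paper's route makes the representing measures $\rho_{x}$ explicit and thereby exhibits the $x$-dependence of the thermal functions directly.

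Where you diverge unnecessarily is Step~3. The paper does \emph{not} detect non-membership in $\mathcal C$ via spatial variation of thermal data; it simply notes that the mixture ``is not a reference state in general, since $\omega_{0}$ is not.'' Your elaborate case analysis is therefore extra work relative to what the paper claims. Moreover, your residual case is not well posed: you speak of ``the signed measure representing $\omega_{1}$,'' but $\omega_{1}\notin\mathcal C$ need not lie in the linear span of the KMS states at all, so no such measure exists in general, and the injectivity hypothesis you invoke is both an additional assumption and not applicable as stated. If you want to match the paper, drop Step~3 entirely and conclude as the paper does; if you want to be more careful than the paper, the cleanest remark is that $\omega_{t}\in\mathcal C$ would force $\omega_{1}=t^{-1}\omega_{t}-t^{-1}(1-t)\omega_{\rho_{0}}$ to lie in the \emph{affine} span of $\mathcal C$, so choosing $\omega_{1}$ outside that span (which is possible whenever $\mathcal S$ is not contained in it) suffices---this is the content behind the paper's phrase ``in general.'' Your spatial-variation criterion is a pleasant sufficient condition but is neither necessary for the statement nor used in the paper.
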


\begin{proof}
We pick any state $\omega_{0}\in\mathcal{S}$ with $\omega_{0}\notin\mathcal{C}$\,. In the following, we use $\omega_{0}$ to construct an LTE state $\omega$ that is not in $\mathcal{C}$\,.

Since $\tau_{B}$ is a norm on $S(0)$\,, it is also on each $S(x)$\,, $x\in\mathcal{O}$\,. Both this and the finite-dimensionality of $S(x)$ imply that for each $x\in\mathcal{O}$ there is some positive real number $C_{x}$ such that
\begin{equation}
 |\omega_{0}(\phi(x))|\leq C_{x}\,\tau_{B}(\phi(x))~~\text{for all}~\phi(x)\in S(x)\,.
\end{equation}

This in turn implies that for each $x\in\mathcal{O}$ we can lift $\omega_{0}$ to become a linear functional on the subspace of the thermal functions corresponding to $S(x)$ by \eqref{ThermalFunctions}. This is done by setting $\omega_{0}(\Phi)(x):=\omega_{0}(\phi(x))$\,, where $\Phi$ is the thermal function corresponding to $\phi(x)$ (cf.\ \eqref{ThermalFunctions}). In particular $\Phi\equiv 0$ entails $\omega_{0}(\Phi)(x)=0$\,, so the (normalized) functionals $\omega_{0}(\cdot)(x)$ are well-defined for each $x\in\mathcal{O}$\,. Moreover, they are bounded by the norm $\tau_{B}$ and thus we may extend them to the whole of $C^{0}(B)$ by use of the Hahn-Banach theorem \footnote{Here, $C^{0}(B)$ denotes the space of continuous complex-valued functions on $B$.}. We may choose these extensions, denoted by $\omega_{0}(\cdot)(x)$ once again, to be Hermitean. For each $x\in\mathcal{O}$ we find
\begin{equation*}
 |\omega_{0}(F)(x)|\leq C_{x}\,||F||_{B}~~\text{for all}~F\in C^{0}(B)\,,
\end{equation*}
where $||\cdot||_{B}$ is the supremum norm on $C^{0}(B)$\,, with respect to which this space is a commutative $C^{*}$-algebra with unit. Any normalized Hermitean linear functional on this algebra can be represented by a normalized \emph{signed} measure $\sigma_{x}$ \cite{Cohn80}, decomposable into two positive measures: $\sigma_{x}=\sigma^{+}_{x}-\sigma^{-}_{x}$\,. Thus, for each $x\in\mathcal{O}$\,,
\begin{equation}\label{MeasureRep}
 \omega_{0}(F)(x)=\int_{B}~d\sigma_{x}(\beta)~F(\beta)\,.
\end{equation}

Assume that we are given a positive measure $\tau$ such that the sum $\sigma_{x}+\tau$ is a \emph{positive} measure for all $x\in\mathcal{O}$ and let us define the state
\begin{equation}\label{LTEStateO}
 \omega:=(1+\tau(B))^{-1}\Bigl(\omega_{0}+\int_{B}\,d\tau(\beta)\,\omega_{\beta}\Bigr)~.
\end{equation}
It is easy to see that for any $x\in\mathcal{O}$ and for all $\phi(x)\in S(x)$\,, we have $\omega(\phi(x))=\int_{B}\,d\rho_{x}(\beta)\,\omega_{\beta}(\phi(x))$ with measure 
\begin{equation}\label{MeasureTau}
\rho_{x}:=(1+\tau(B))^{-1}\bigl(\sigma_{x}+\tau\bigr)\,.
\end{equation}
In other words, $\omega$ is an $S(x)$-thermal state for all $x\in\mathcal{O}$, i.e.\ $\omega$ is $S_{\mathcal{O}}$-thermal. It is not a reference state in general, since $\omega_{0}$ is not.

It remains to construct a suitable measure $\tau$\,. We have obtained the functionals $\omega_{0}(\cdot)(x)$ for each $x$ on the space of thermal functions corresponding to elements of $S(0)$\,. This space is a finite dimensional vector space and hence we may express each of the $\omega_{0}(\cdot)(x)$ in terms of a basis of Hermitean functionals $\{\omega^{i}\}^{n}_{i=1}$\,:
\begin{displaymath}
 \omega_{0}(\cdot)(x)=\sum^{n}_{i=1}a_{i}(x)\,\omega^{i}
\end{displaymath}
for some $a_{i}(x)\in\mathbb{R}$\,. Due to the continuity of $x\mapsto\omega_{0}(\phi(x))$ for all $\phi(x)\in S(x)$ (which is guaranteed by our choice of $\mathcal{S}$), we find that $x\mapsto\omega_{0}(\Phi)(x)$ is continuous for all thermal fucntions $\Phi$ induced by $S(x)$\,. By compactness of $\mathcal{O}$, the coefficient functions $a_{i}:\mathcal{O}\rightarrow\mathbb{R}$ are therefore bounded by positive constants $C_{i}$\,. We may use the Hahn-Banach theorem to extend the functionals $\omega^{i}$ to Hermitean functionals on all of $C^{0}(B)$\,, which we also denote by $\omega^{i}$. As before, each of the $\omega^{i}$ corresponds to a signed measure $\sigma^{i}=\sigma^{i,+}-\sigma^{i,-}$ and it follows that the measures $\sigma_{x}$ in \eqref{MeasureRep} can be written as $\sigma_{x}=\sum^{n}_{i=1}a_{i}(x)\,\sigma^{i}$\,. Therefore, the $x$-dependence of the signed measures $\sigma_{x}$ is given entirely by the functions $a_{i}$\,. We now define the positive measure
\begin{displaymath}
 \tau:=\sum_{i=1}^{n}C_{i}\,|\sigma^{i}|\,,
\end{displaymath}
where $|\sigma^{i}|:=\sigma^{i,+}+\sigma^{i,-}$ denote the \emph{variations} of the signed measures $\sigma^{i}$ (which are positive by definition). It is easy to see that $\sigma_{x}+\tau$ is a positive measure for all $x\in\mathcal{O}$\,.
\end{proof}

The LTE states we have constructed are of the form indicated in \eqref{LTEStateO} and may therefore be interpreted as a ``perturbation'' (given by $\omega_{0}$) around the thermal reference state $\frac{1}{\tau(B)}\int_{B}\,d\tau(\beta)\,\omega_{\beta}$ with appropriate measure $\tau$\,. In these states, the thermal functions corresponding to $S(0)$ will exhibit non-trivial spacetime dependence as may be seen by the fact that $\omega(\Phi)(x)=\omega_{\rho_{x}}(\phi(0))$ for all $x\in\mathcal{O}$\,. The measure $\rho_{x}$, given by equation \eqref{MeasureTau}, clearly depends non-trivially on $x$ because the function $x\mapsto\omega_{0}(\phi(x))$ depends non-trivially on $x$\,.

It is shown in \cite{BuchOjimRoos02} that the argument for the existence of an $S(x)$-thermal state for one \emph{point} $x$ can be extended to the case where $\tau_{B}$ is not a norm. This case corresponds to the occurrence of ``equations of state'', i.e.\ non-trivial relations between the thermal functions. It is unclear whether the same reasoning may be adopted in case of $S_{\mathcal{O}}$-thermality. It appears, however, that the local validity of some equation of state is a strong restriction for LTE states, and therefore the assumption that $\tau_{B}$ is a norm seems appropriate.

\section{Local Thermal Observables in Curved Spacetime}\label{Curved}

Buchholz and Schlemmer \cite{BuchSchl07} made a proposal on how to define LTE for quantum field theories on curved spacetime. The definition of a sensible class of reference states requires global thermal equilibrium states, which in general do not exist in the presence of curvature. The authors therefore suggested using \emph{locally covariant quantum field theory} \cite{BrunFredVerc03} to be able to compare states of QFTs on different spacetimes. This enables one to use the reference states on Minkowski spacetime in order to define LTE on curved spacetime. We find that the original definition of LTE has to be modified, because in the presence of curvature, linear combinations of thermal observables cannot in general be thermal observables again. In this section, we discuss this important point in some detail while in the next section we show existence of states that are in LTE at any given point of curved spacetime.

We write $(M,g)$ for a 4-dimensional (globally hyperbolic, oriented, time-oriented) spacetime $M$ with Lorentzian metric $g$. We assume that we are given a countable set $\mathbf{F}$ of ``master fields'' that can be evaluated on each $(M,g)$ to yield a set
\begin{equation}\label{FieldContent}
 \mathbf{F}_{g}=\bigl\{\phi^{(i)}_{g}\bigr\}_{i\in I}~~~~~~~\text{($I$ some countable index set)}
\end{equation}
of observable, tensorial quantum fields \footnote{Note that here and in the following, objects carrying a subscript ``$g$'' refer to the spacetime $(M,g)$\,.} on $(M,g)$\,. We take $\phi^{(0)}_{g}$ to be the identity $\mathbbm{1}$ and call $\mathbf{F}_{g}$ the field content on $(M,g)$\,. The smearings of the elements of $\mathbf{F}_{g}$ generate a $*$-algebra $\mathcal{A}(M,g)$\,, the algebra of observables on $(M,g)$\,.

For definiteness, let the quantum fields correspond to contravariant tensors of rank $r(i)$\,. If $A_{g,(i)}$ is a smooth covariant tensor field of rank $r(i)$ on $(M,g)$\,, then the evaluation $A_{g,(i)}\cdot\phi^{(i)}_{g}$ yields a (scalar) distribution taking values in $\mathcal{A}(M,g)$ (there is no sum over the indices $i$\,). Naturally, if $e$ is a frame represented by some Lorentz tetrad \footnote{This means that the tetrad is orthonormal with respect to $\mathrm{diag}(+1,-1,-1,-1)$ and $e_{0}$ is timelike future pointing.} $\{e_{\mu}\}^{3}_{\mu=0}$\,, then
\begin{equation*}
 \phi^{(i)}_{g,\,\mu_{1}\dots\mu_{r(i)}}:= e_{\mu_{1}}\otimes\dots\otimes e_{\mu_{r(i)}}\cdot\phi^{(i)}_{g}
\end{equation*}
are the \emph{components} of $\phi^{(i)}_{g}$ in the frame $e$\,.

For each $(M,g)$, we select a convex subset $\mathcal{S}(M,g)$ of the set of states of $\mathcal{A}(M,g)$\,, which we assume to be such that for each $\omega\in\mathcal{S}(M,g)$ the distribution $$f\mapsto\omega(A_{g,(i)}\cdot\phi^{(i)}_{g}(f))$$ can in fact be represented by a (smooth) function $x\mapsto\omega(A_{g,(i)}\cdot\phi^{(i)}_{g}(x))$. In other words, for each $x\in M$ we may view the point-like fields $A_{g,(i)}\cdot\phi^{(i)}_{g}(x)$ as linear forms on the linear span of $\mathcal{S}(M,g)$\,.

Roughly, the principles of locality and covariance \cite{BrunFredVerc03} require (among other things) that for each orientation, time orientation and causality preserving isometric embedding \footnote{We call such maps \emph{causal isometric embeddings}.} $\psi:(M,g)\rightarrow (M',g')$\,, there exists an injective $*$-homomorphism $\alpha_{\psi}:\mathcal{A}(M,g)\rightarrow\mathcal{A}(M',g')$ that extends to the point-like fields such that
\begin{equation}\label{LCQF}
 \alpha_{\psi}(A_{g,(i)}\cdot\phi^{(i)}_{g}(x))=A_{g',(i)}\cdot\phi^{(i)}_{g'}(\psi(x))~,
\end{equation}
where $A_{g',(i)}:=\psi_{*}A_{g,(i)}$ is the natural push-forward of $\psi$ acting on the tensor field $A_{g,(i)}$\,. The assignment $(M,g)\mapsto\phi^{(i)}_{g}$ is called a \emph{(tensorial) local and covariant quantum field}. In addition we require $\alpha_{\psi}^{*}\mathcal{S}(M',g')\subset\mathcal{S}(M,g)$\,, where $\alpha_{\psi}^{*}$ is the dual map of $\alpha_{\psi}$\,, which lets the map $(M,g)\mapsto\mathcal{S}(M,g)$ become a \emph{locally covariant state space}.

\quad

After these preparations, let us come to our definition of LTE in curved spacetime, which, as was mentioned before, is different from the one given in \cite{BuchSchl07}. On Minkowski spacetime, denoted by $(M_{0},g_{0})$\,, we choose the set of thermal reference states $\mathcal{C}\subset\mathcal{S}(M_{0},g_{0})$ as in section \ref{Mink}, satisfying the same assumptions. We select a \emph{finite} set of thermal observables
\begin{equation}\label{TO}
  \bigl\{\tau^{(i)}_{g_{0}}\bigr\}^{n}_{i=0}\,,~\tau^{(0)}_{g_{0}}=\mathbbm{1}\,,
\end{equation}
as a subset of the field content on Minkowski spacetime. Now we choose a subset $\mathbf{T}$ of the set of ``master fields'' $\mathbf{F}$ such that $\mathbf{T}_{g_{0}}$ is exactly the set of thermal observables \eqref{TO}. On each $(M,g)$\,, we therefore find a subset $\mathbf{T}_{g}$ of the field content, which may be used as the set of thermal observables.

\begin{remark}
In general, there are many choices for $\mathbf{T}$\,. Hence there is some leeway in determining thermal observables on curved spacetime and criteria are needed to make a suitable choice. In the particular example of the algebra of Wick polynomials of the free scalar field \cite{HollWald01}, the (differentiated) Wick powers are determined up to some universal renormalization constants. An interesting method to fix these numbers is presented in \cite{BuchSchl07}, using as additional physical input the interplay between the concept of LTE and KMS states on particular spacetimes where the latter exist.
\end{remark}

For $x\in M$\,, we define the set $\mathbf{T}_{g}(x)$ to consist of all the components of the fields $\tau^{(i)}_{g}$\,, $i=0,1,\dots,n$\,, in a local Lorentz frame $e_{g}$ around $x$. At $x$\,, we identify $e_{g}$ with some Lorentz frame $e_{g_{0}}$ at the origin $0$ of Minkowski spacetime and define $\mathbf{T}_{g_{0}}(0)$ accordingly with respect to $e_{g_{0}}$\,.

\begin{definition}\label{LTEcurved}
A state $\omega\in\mathcal{S}(M,g)$ is \emph{$\mathbf{T}_{g}(x)$-thermal} if for some $\omega_{\rho_{x}}\in\mathcal{C}$
\begin{equation*}
 \omega(\tau^{(i)}_{g,\,\mu_{1}\dots\mu_{r(i)}}(x))=\omega_{\rho_{x}}(\tau^{(i)}_{g_{0},\,\mu_{1}\dots\mu_{r(i)}}(0))
\end{equation*}
for all $\mu_{1},\dots,\mu_{r(i)}$ for all $i=0,1,\dots,n$.
\end{definition}

\begin{remark}
 The identification of the two frames $e_{g}$ and $e_{g_{0}}$ is arbitrary. The definition of LTE is however not affected by this choice, as seen in the following example. Let $\omega\in\mathcal{S}(M,g)$ be in LTE at $x$ with regard to some thermal observable $\tau_{g,\,\mu}(x)$ under the identification of $e_{g}$ and $e_{g_{0}}$\,. For simplicity, suppose $\omega$ has a definite temperature vector $\beta\in V^{+}$ at $x$. This means
\begin{equation}\label{ExLTE}
 \omega(\tau_{g,\,\mu}(x))=\omega_{\beta}(\tau_{g_{0},\,\mu}(0))~\text{for some}~\omega_{\beta}\in\mathcal{C}~.
\end{equation}
Choosing another frame $e'_{g}=e_{g}\Lambda^{-1}$ ($\Lambda$ some (proper) local Lorentz transformation in $T_{x}M$), we find
\begin{align*}
 \omega(e'_{g,\,\mu}\cdot\tau_{g}(x))&\,\,=\,\,(\Lambda^{-1})_{~\,\mu}^{\nu}\,\omega(e_{g,\,\nu}\cdot\tau_{g}(x))\\
&\stackrel{\eqref{ExLTE}}{=}(\Lambda^{-1})_{~\,\mu}^{\nu}\,\omega_{\beta}(e_{g_{0},\,\nu}\cdot\tau_{g_{0}}(0))\\
&\,\,=\,\,\omega_{\Lambda\beta}(e_{g_{0},\,\mu}\cdot\tau_{g_{0}}(0))~,
\end{align*}
where the last equality follows from the transformation law of the KMS states under the Poincar\'e group \cite[eqn.\ (2.8)]{BuchOjimRoos02}. Had we now identified $e'_{g}$ with $e_{g_{0}}$\,, we would still conclude that $\omega$ is in LTE with the same scalar temperature $\beta\in\mathbb{R}^{+}$\,.
\end{remark}

We would like to emphasize the following important point: contrary to the definition of LTE on curved spacetimes given in \cite{BuchSchl07}, where whole vector spaces of thermal observables are considered, we find that \emph{linear combinations of thermal observables cannot in general be thermal observables again}. We will exemplify this in the following model, which is treated in detail in \cite{SchlVerc08}.

Namely, on some fixed $(M,g)$\,, we consider the Klein Gordon field (massless, for simplicity): $(\Box_{g}+\xi R_{g})\phi=0$ \footnote{Here, $\Box_{g}$ denotes the D'Alembertian on $(M,g)$ while $R_{g}$ is the scalar curvature.} with curvature coupling $\xi\in\mathbb{R}$\,. The field content $\mathbf{F}_{g}(x)$ at $x$  in some spacetime $(M,g)$ consists of the local and covariant (differentiated) Wick powers
\begin{equation*}
 \wick{(\nabla^{o(1)}\phi)(\nabla^{o(2)}\phi)\dots (\nabla^{o(n)}\phi)}
\end{equation*}
and their total derivatives in some frame $e_{g}$ (as constructed in \cite{HollWald01}, see \cite{More03} for differentiated Wick powers) \footnote{We assume that renormalization ambiguities are dealt with already.}. The $\nabla^{o}$ are monomials in the covariant derivatives of order $o$\,. The appropriate state space $\mathcal{S}(M,g)$ is given by the set of Hadamard states \cite{Wald94, BrunFredVerc03, SahlVerc01, HollRuan02, Sand09}, in which the pointlike fields can be defined.

Fix $x\in M$\,. The set of thermal observables at $x$ is defined as
\begin{equation}\label{SetTO}
 \mathbf{T}^{(2)}_{g}(x)=\{\mathbbm{1},\wick{\phi^2}_{g}(x),\{\epsilon_{g,\mu\nu}(x)\}\}\,,
\end{equation}
where
\begin{equation}
 \epsilon_{g,\mu\nu}(x):=-\wick{\phi(\nabla_{\mu}\nabla_{\nu}\phi)}_{g}(x)+\frac{1}{4}\nabla_{\mu}\nabla_{\nu}\wick{\phi^{2}}_{g}(x)
\end{equation}
are the components of the \emph{thermal energy tensor}, see \cite{BuchOjimRoos02, SchlVerc08} (using the Leibniz rule, one can express this tensor in different ways).

After choosing an identification of frames as described previously, we have the following assignments of forms due to local covariance:
\begin{equation}\label{Assign1}
\begin{split}
&\mathbbm{1}\leftrightarrow\mathbbm{1}\,,~\wick{\phi^2}_{g}(x)\leftrightarrow\wick{\phi^2}_{g_{0}}(0)~\text{and}\\
&\epsilon_{g,\mu\nu}(x)\leftrightarrow\epsilon_{g_{0},\mu\nu}(0)~\text{for}~\mu,\nu=0,1,2,3\,.
\end{split}
\end{equation}
This provides an identification of $\mathbf{T}^{(2)}_{g_{0}}(0)$ with $\mathbf{T}^{(2)}_{g}(x)$\,, which we need in order to define LTE on $M$\,. There is, however, another locally covariant quantum field of interest, since it arises as linear combination of thermal observables \footnote{Recall that $\eta^{\mu\nu}\nabla_{\mu}\nabla_{\nu}\wick{\phi^2}_{g}(x)=g^{ab}\nabla_{a}\nabla_{b}\wick{\phi^2}_{g}(x)$ in abstract index notation, since the frame is taken to be orthonormal at $x$\,.}:
\begin{equation}\label{Assign2}
 \eta^{\mu\nu}\nabla_{\mu}\nabla_{\nu}\wick{\phi^2}_{g}(x)\leftrightarrow\eta^{\mu\nu}\partial_{\mu}\partial_{\nu}\wick{\phi^2}_{g_{0}}(0)\,.
\end{equation}
The corresponding relations read \cite{SchlVerc08}:
\begin{equation}\label{Relations}
 \begin{split}
\eta^{\mu\nu}\epsilon_{g,\mu\nu}(x)-U_{g}(x)\mathbbm{1}-\xi R_{g}(x)\,\wick{\phi^2}_{g}(x)&=\frac{1}{4}\eta^{\mu\nu}\nabla_{\mu}\nabla_{\nu}\wick{\phi^2}_{g}(x)\\
\text{and}\quad\eta^{\mu\nu}\epsilon_{g_{0},\mu\nu}(0)&=\frac{1}{4}\eta^{\mu\nu}\partial_{\mu}\partial_{\nu}\wick{\phi^2}_{g_{0}}(0)\,,
\end{split}
\end{equation}
 $\eta^{\mu\nu}$ being the entries of the diagonal matrix $\mathrm {diag}(1,-1,-1,-1)$\,. Here, $U_{g}(x)$ is determined by $\xi$ and the local geometry of $(M,g)$ around $x$ \cite{SchlVerc08,More03}.

It is a consequence of these relations that the assignments \eqref{Assign1} and \eqref{Assign2} cannot be extended to a \emph{linear} map between $\mathrm{span}_{\mathbb{R}}\{\mathbf{T}^{(2)}_{g_{0}}(0)\}$ and $\mathrm{span}_{\mathbb{R}}\{\mathbf{T}^{(2)}_{g}(x)\}$\,, because this would require the term $U_{g}(x)\mathbbm{1}+\xi R_{g}(x)\,\wick{\phi^2}_{g}(x)$ to vanish, which it clearly does not. Therefore, the locally covariant quantum fields only induce an identification between the sets $\mathbf{T}^{(2)}_{g_{0}}(0)$ and $\mathbf{T}^{(2)}_{g}(x)$\,, and \emph{not} between the (real) vector spaces spanned by them.

This situation can be expected to occur quite generally, because there are linear relations, arising for dynamical reasons like \eqref{Relations}, between the fields in any physical model. The point is that these relations, in general, look different at $x\in M$ and $0\in M_{0}$\,, which prevents us from identifying \emph{linear} spaces of thermal observables at $x$ and $0$ using locally covariant quantum fields. Therefore the definition of LTE on curved spacetimes given in \cite{BuchSchl07} has to be relaxed, because it requires a linear correspondence between vector spaces of thermal observables at $x$ and $0$ \footnote{There are non-linear (non-unique) correspondences, but this collides with the definition of LTE states, because states are linear.}.

However, let $\omega\in\mathcal{S}(M,g)$ be $\mathbf{T}^{(2)}_{g}(x)$-thermal (cf.\ Definition \ref{LTEcurved}). Then, by \eqref{Relations}, using the fact that the reference states are translationally invariant
\begin{equation}\label{RelLTE}
 \begin{split}
0&=\frac{1}{4}\omega_{\rho}\bigl(\eta^{\mu\nu}\partial_{\mu}\partial_{\nu}\wick{\phi^2}_{g_{0}}(0)\bigr)=\eta^{\mu\nu}\omega_{\rho}\bigl(\epsilon_{g_{0},\mu\nu}(0)\bigr)\\
&=\eta^{\mu\nu}\omega\bigl(\epsilon_{g,\mu\nu}(x)\bigr)=U_{g}(x)+\xi R_{g}(x)\,\omega\bigl(\wick{\phi^2}_{g}(x)\bigr)+\frac{1}{4}\omega\bigl(\eta^{\mu\nu}\nabla_{\mu}\nabla_{\nu}\wick{\phi^2}_{g}(x)\bigr)\\
&=U_{g}(x)+\xi R_{g}(x)\,\omega_{\rho}\bigl(\wick{\phi^2}_{g_{0}}(0)\bigr)+\frac{1}{4}\omega\bigl(\eta^{\mu\nu}\nabla_{\mu}\nabla_{\nu}\wick{\phi^2}_{g}(x)\bigr)\,.
\end{split}
\end{equation}
The important thing to note here is that $\eta^{\mu\nu}\nabla_{\mu}\nabla_{\nu}\wick{\phi^2}_{g}(x)$ is not a thermal observable (even though it is a sum of thermal observables). The condition of $\mathbf{T}^{(2)}_{g}(x)$-thermality does not require it to take the same value in $\omega$ as $\eta^{\mu\nu}\partial_{\mu}\partial_{\nu}\wick{\phi^2}_{g_{0}}(0)$ does in the reference state $\omega_{\rho}$ (which is zero), but it fixes its value by \eqref{RelLTE}.

\begin{remark}
 We have not defined LTE in regions of curved spacetime (as we did in the case of Minkowski spacetime). Assume for a moment that there is a region $\mathcal{O}$ of $M$ such that $\omega$ is $\mathbf{T}^{(2)}_{g}(x)$-thermal for all $x\in\mathcal{O}$\,. Using the fact that
\begin{equation}
\begin{split}
\omega\bigl(\eta^{\mu\nu}\nabla_{\mu}\nabla_{\nu}\wick{\phi^2}_{g}(x)\bigr)&=\Box_{g}\omega\bigl(\wick{\phi^2}_{g}(x)\bigr)\\
&=\Box_{g}\omega_{\rho_{x}}\bigl(\wick{\phi^2}_{g_{0}}(0)\bigr)
\end{split}
\end{equation}
by the LTE condition, we find that by \eqref{RelLTE}
\begin{equation}
 \left(\frac{1}{4}\Box_{g}+\xi R_{g}(x)\right)\,\omega_{\rho_{x}}\bigl(\wick{\phi^2}_{g_{0}}(0)\bigr)=-U_{g}(x)\,.
\end{equation}
Since the thermal function corresponding to the Wick square is up to a constant the temperature squared \cite{BuchOjimRoos02}, we find that the condition of LTE entails an evolution equation (of Klein Gordon type) for the temperature of the system in $\mathcal{O}$\,. This is one of the instances where in a LTE state the microscopic dynamics of the system yields an equation that governs the spacetime behaviour of a macroscopic observable. It is certainly worthwhile to investigate these matters further.
\end{remark}

The next problem we wish to discuss concerns the proper choice of sets of thermal observables. Once again it turns out that relations between the fields play a critical role. Again, we will illustrate this in the model of the massless Klein Gordon field. In \cite{BuchOjimRoos02}, it was explained that the stress energy tensor $T$ on Minkowski spacetime cannot be a thermal observable. $T$ splits into two parts, one of which vanishes in all reference states, while the other is the thermal energy tensor $\epsilon$ mentioned above (which is a thermal observable). However, from the point of view of locally covariant quantum field theory, there is another reason why $T$ is not a good thermal observable.

In the massless case, with $\xi=\frac{1}{6}$ we have the following relations \cite{SchlVerc08}:
\begin{equation}\label{EMTensor}
 \begin{split}
\eta^{\mu\nu}T_{g,\mu\nu}(x)&=C_{g}(x)\,\mathbbm{1}\\
\text{and}\quad\eta^{\mu\nu}T_{g_{0},\mu\nu}(0)&=0\,,
\end{split}
\end{equation}
where $C_{g}(x)$ is determined by the local geometry of $M$ around $x$ (depending on the renormalization prescription used to determine $T_{g,\mu\nu}(x)$\,: ``conformal anomaly''). Had we chosen $T$ to be a thermal observable, and were $\omega\in\mathcal{S}(M,g)$ to be in LTE with respect to this thermal observable in the sense of Definition \ref{LTEcurved}, then:
\begin{equation}
 \begin{split}
  C_{g}(x)&=\eta^{\mu\nu}\omega(T_{g,\mu\nu}(x))\\
&=\eta^{\mu\nu}\omega_{\rho}(T_{g_{0},\mu\nu}(0))\\
&=0\,,
 \end{split}
\end{equation}
which is a contradiction - meaning that there cannot be any states in LTE with respect to $T$ (even in our sense of LTE).

The reason is that the set of observables consisting of $\mathbbm{1}$ and $T$ is not linearly independent on all spacetimes \footnote{It is not linearly independent on $(M,g)$ if $C_{g}(x)\neq 0$, cf.\ \eqref{EMTensor}.}. We therefore propose the following
\begin{condition}
 The set of thermal observables $\mathbf{T}$ must be such that $\mathbf{T}_{g}(x)$ is linearly independent in $\mathrm{span}_{\mathbb{R}}\{\mathbf{T}_{g}(x)\}$ for all $x\in M$ for all $(M,g)$\,.
\end{condition}
The sets of thermal observables $\mathbf{T}^{(2)}$ (cf.\ \eqref{SetTO}) satisfy this condition, as the field $\eta^{\mu\nu}\nabla_{\mu}\nabla_{\nu}\wick{\phi^2}$ is not a thermal observable.

To summarize: in a first step, we found that linear combinations of thermal observables \emph{need not} be thermal observables again, which led us to adopt Definition \ref{LTEcurved} as the correct characterization of LTE in curved spacetime. In a second step, we argued that linear combinations of thermal observables \emph{cannot} in general be thermal observables again. The latter point seems to be of relevance in choosing ``good'' sets of thermal observables in any model.

\quad

We end this section by briefly describing the situation in more general terms. We consider sets of thermal observables $\mathbf{T}$ such that on each spacetime $(M,g)$ at some $x\in M$ in some frame $e_{g}$\,,
\begin{equation}
  \mathbf{T}_{g}(x):=\bigl\{\tau^{(i)}_{g}(x)\bigr\}^{n}_{i=0}\quad(\text{with}~\tau^{(0)}_{g}=\mathbbm{1})\,,
\end{equation}
is linearly independent. The space $\mathrm{span}_{\mathbb{R}}\{\mathbf{T}_{g}(x)\}$ may contain further locally covariant quantum fields $\phi^{(i)}_{g}(x)$ due to relations of the types mentioned before, as shown by $\eta^{\mu\nu}\nabla_{\mu}\nabla_{\nu}\wick{\phi^2}_{g}(x)$ in our example. We define the subset $\mathbf{G}_{g}(x)$ of $\mathrm{span}_{\mathbb{R}}\{\mathbf{T}_{g}(x)\}$ to consist of the thermal observables and all additional \emph{locally covariant} quantum fields $\phi^{(i)}_{g}(x)$ which arise as their linear combinations.

We found that the definition of LTE using vector spaces of thermal observables is too restrictive in the presence of curvature, since there is an identification of $\mathbf{G}_{g}(x)$ with $\mathbf{G}_{g_{0}}(0)$ by local covariance. We are, in general, not able to extend this identification to a linear map between $\mathrm{span}_{\mathbb{R}}\{\mathbf{T}_{g}(x)\}$ and $\mathrm{span}_{\mathbb{R}}\{\mathbf{T}_{g_{0}}(0)\}$\,, due to the spacetime dependent relations among the fields.
\begin{remark}
We mention as an aside that the use of vector spaces of thermal observables is valid and equivalent to our definition in the particular case of Minkowski spacetime. In fact this is how we defined LTE in section \ref{Mink}.
\end{remark}

\section{Local Equilibrium States in Curved Spacetime}\label{Curved2}

In the following, we present a result on the existence of pointwise LTE states on curved spacetime. It naturally also applies to Minkowski spacetime as a special case. However, the methods used here are unrelated to the ones employed in Section \ref{Mink} and the discussion relies on entirely different features of QFT. We restrict attention to the simple case when our sets of fields consist of scalar fields only. However, the whole discussion as well as the results that follow also apply to the more general case.

Fix $\mathbf{T}$ and $(M,g)$\,. At some $x\in M$\,, we consider the set of thermal observables $\mathbf{T}_{g}(x)=\bigl\{\tau^{(i)}_{g}(x)\bigr\}^{n}_{i=0}$ with $\tau^{(0)}_{g}(x)=\mathbbm{1}$ and define the following convex subset of $\mathbb{R}^{n}$\,:
\begin{equation*}
 \mathfrak{T}_{g}(x):=\bigl\{\,\{\omega\bigl(\tau^{(i)}_{g}(x)\bigr)\}^{n}_{i=1}~|~\omega\in\mathcal{S}(M,g)\,\bigr\}\,.
\end{equation*}
Similarly, we define another convex subset of $\mathbb{R}^{n}$\,,
\begin{equation*}
 \mathfrak{T}_{\mathcal{C}}:=\bigl\{\,\{\omega_{\rho}\bigl(\tau^{(i)}_{g_{0}}(0)\bigr)\}^{n}_{i=1}~|~\omega_{\rho}\in\mathcal{C}\,\bigr\}\,.
\end{equation*}
Clearly $\mathfrak{T}_{g}(x)\cap\mathfrak{T}_{\mathcal{C}}\neq\emptyset$ is necessary and sufficient for the existence of $\mathbf{T}_{g}(x)$-thermal states. Let us briefly comment on the structure of the convex sets $\mathfrak{T}_{g}(x)$ and $\mathfrak{T}_{\mathcal{C}}$\,. The convex dimension\footnote{The dimension of a convex set is defined as the dimension of its affine hull. Recall that the affine hull of a subset $E$ of $\mathbb{R}^{n}$ is defined as the intersection of all affine subspaces of $\mathbb{R}^{n}$ containing $E$\,. See e.g.\ \cite{Hoer94} for details.} of $\mathfrak{T}_{g}(x)$ equals $n$ because the thermal observables are linearly independent. Therefore, $\mathfrak{T}_{g}(x)$ has non-empty interior in $\mathbb{R}^{n}$\,. It is more complicated to determine the dimension of $\mathfrak{T}_{\mathcal{C}}$\,, but it is certainly less than $n$ in case there are equations of state, i.e.\ relations among the thermal observables that show up in all reference states.

We will show in Proposition \ref{UnboundAss} that it is physically meaningful to assume that $\mathfrak{T}_{g}(x)=\mathbb{R}^{n}$, i.e.\  that for every $n$-tuple of real numbers there are states in $\mathcal{S}(M,g)$ whose expectation values in the members of $\mathbf{T}_{g}(x)$ yield exactly this tuple. Clearly we have
\begin{observation}\label{main}
 In case $\mathfrak{T}_{g}(x)=\mathbb{R}^{n}$, there are states in $\mathcal{S}(M,g)$ that are $\mathbf{T}_{g}(x)$-thermal.
\end{observation}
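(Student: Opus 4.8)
The statement to be proved, Observation \ref{main}, is essentially immediate once one unwinds the definitions, so the ``proof'' is really just a matter of assembling pieces already in place. The plan is to recall that, by the definition given just above Observation \ref{main}, the existence of a $\mathbf{T}_{g}(x)$-thermal state is \emph{equivalent} to the condition $\mathfrak{T}_{g}(x)\cap\mathfrak{T}_{\mathcal{C}}\neq\emptyset$. Indeed, $\omega\in\mathcal{S}(M,g)$ is $\mathbf{T}_{g}(x)$-thermal precisely when there is a reference state $\omega_{\rho}\in\mathcal{C}$ matching all the expectation values $\omega(\tau^{(i)}_{g}(x))=\omega_{\rho}(\tau^{(i)}_{g_{0}}(0))$ for $i=1,\dots,n$ (the case $i=0$ being the normalization condition, automatic for states). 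In other words, $\omega$ is $\mathbf{T}_{g}(x)$-thermal iff the tuple $\{\omega(\tau^{(i)}_{g}(x))\}_{i=1}^{n}$ lies in $\mathfrak{T}_{\mathcal{C}}$, and such a tuple is realized by \emph{some} state in $\mathcal{S}(M,g)$ iff it also lies in $\mathfrak{T}_{g}(x)$.

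The single substantive input is then the hypothesis $\mathfrak{T}_{g}(x)=\mathbb{R}^{n}$. Under this assumption, every tuple in $\mathbb{R}^{n}$ — in particular every tuple coming from a reference state — is attained by some $\omega\in\mathcal{S}(M,g)$. Since $\mathfrak{T}_{\mathcal{C}}$ is non-empty (it contains, for instance, the tuple of values of the thermal observables in any fixed KMS state $\omega_{\beta}$), we may pick any point $p\in\mathfrak{T}_{\mathcal{C}}\subset\mathbb{R}^{n}=\mathfrak{T}_{g}(x)$, and the state $\omega$ realizing $p$ is by construction $\mathbf{T}_{g}(x)$-thermal. Concretely: choose $\omega_{\rho}\in\mathcal{C}$, set $p_{i}:=\omega_{\rho}(\tau^{(i)}_{g_{0}}(0))$, and use $\mathfrak{T}_{g}(x)=\mathbb{R}^{n}$ to obtain $\omega\in\mathcal{S}(M,g)$ with $\omega(\tau^{(i)}_{g}(x))=p_{i}$ for all $i$; this $\omega$ satisfies Definition \ref{LTEcurved}.

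There is essentially no obstacle at this stage — the word ``Clearly'' in the statement is justified. The real work has been displaced to Proposition \ref{UnboundAss}, where the hypothesis $\mathfrak{T}_{g}(x)=\mathbb{R}^{n}$ must be argued to be physically reasonable; that argument will presumably exploit unboundedness properties of the pointlike thermal observables (as hinted in the introduction and the abstract), allowing one to push expectation values to arbitrarily large values in either direction and thereby fill out all of $\mathbb{R}^{n}$. For the present Observation, the only thing to verify with any care is the equivalence in the first paragraph, namely that $\mathbf{T}_{g}(x)$-thermality of $\omega$ is exactly membership of its thermal-observable tuple in $\mathfrak{T}_{\mathcal{C}}$; this is a direct transcription of Definition \ref{LTEcurved} and requires no estimates. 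Hence the proof is a two-line deduction from the definitions plus the stated hypothesis.
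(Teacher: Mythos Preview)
Your proposal is correct and matches the paper's approach exactly: the paper gives no formal proof at all, simply noting just before the Observation that ``$\mathfrak{T}_{g}(x)\cap\mathfrak{T}_{\mathcal{C}}\neq\emptyset$ is necessary and sufficient for the existence of $\mathbf{T}_{g}(x)$-thermal states'' and then writing ``Clearly we have'' before stating the Observation. You have spelled out precisely the two-line deduction that the paper leaves implicit.
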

We now wish to relate the assumption $\mathfrak{T}_{g}(x)=\mathbb{R}^{n}$ to properties of QFT that can be checked in examples. For this, we define the \emph{numerical range} of any element $\psi\in\mathrm{span}_{\mathbb{R}}\{\mathbf{T}_{g}(x)\}$ to be the following subset of $\mathbb{R}$\,: $\{\,\omega\bigl(\psi\bigr)\,|\,\omega\in\mathcal{S}(M,g)\,\}$\,.
\begin{proposition}\label{UnboundAss}
The following two statements are equivalent.
\begin{enumerate}
 \item\label{one} The numerical range of any element in $\mathrm{span}_{\mathbb{R}}\{\mathbf{T}_{g}(x)\}$ which is not a multiple of the identity equals all of $\mathbb{R}$.
\item $\mathfrak{T}_{g}(x)=\mathbb{R}^{n}$.
\end{enumerate}
\end{proposition}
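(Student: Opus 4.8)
The plan is to prove the two implications separately, with the direction \ref{one}$\Rightarrow$\ref{one}$'$ (i.e.\ (2)) being the substantive one. First I would dispose of the easy direction: assume $\mathfrak{T}_{g}(x)=\mathbb{R}^{n}$ and pick any $\psi=\sum_{i=1}^{n}c_{i}\tau^{(i)}_{g}(x)+c_{0}\mathbbm{1}\in\mathrm{span}_{\mathbb{R}}\{\mathbf{T}_{g}(x)\}$ which is not a multiple of the identity, so that $(c_{1},\dots,c_{n})\neq 0$. Then $\omega(\psi)=c_{0}+\sum_{i=1}^{n}c_{i}\,\omega(\tau^{(i)}_{g}(x))$, and as the tuple $\{\omega(\tau^{(i)}_{g}(x))\}_{i=1}^{n}$ ranges over all of $\mathbb{R}^{n}$ the quantity $\sum c_{i}\omega(\tau^{(i)}_{g}(x))$ ranges over all of $\mathbb{R}$ (a non-zero linear functional on $\mathbb{R}^{n}$ is surjective). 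Hence the numerical range of $\psi$ is all of $\mathbb{R}$.

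For the converse, assume (1) and suppose for contradiction that $\mathfrak{T}_{g}(x)\neq\mathbb{R}^{n}$. Since $\mathfrak{T}_{g}(x)$ is convex with non-empty interior (it has full convex dimension $n$ because the thermal observables are linearly independent, as noted in the excerpt), and it is a proper subset of $\mathbb{R}^{n}$, it cannot be all of $\mathbb{R}^{n}$; I would then invoke a separating-hyperplane / supporting-hyperplane argument. Concretely, pick a point $p\in\mathbb{R}^{n}\setminus\mathfrak{T}_{g}(x)$; the Hahn--Banach separation theorem for convex sets in $\mathbb{R}^{n}$ yields a non-zero vector $c=(c_{1},\dots,c_{n})\in\mathbb{R}^{n}$ and a real number $\gamma$ with $\sum_{i=1}^{n}c_{i}\,q_{i}\le\gamma<\sum_{i=1}^{n}c_{i}\,p_{i}$ for all $q=(q_{1},\dots,q_{n})\in\mathfrak{T}_{g}(x)$. (One has to be a touch careful about whether $\mathfrak{T}_{g}(x)$ is closed; if not, one separates $p$ from the closure, which is still a proper convex subset once one checks the interior point is not on the boundary hyperplane — alternatively use that $\mathfrak{T}_{g}(x)$ is relatively open along some direction. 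This is the kind of routine convex-geometry bookkeeping I would not belabour.) Now consider $\psi:=\sum_{i=1}^{n}c_{i}\,\tau^{(i)}_{g}(x)\in\mathrm{span}_{\mathbb{R}}\{\mathbf{T}_{g}(x)\}$. Since $c\neq 0$ and the $\tau^{(i)}_{g}(x)$ together with $\mathbbm{1}$ are linearly independent, $\psi$ is not a multiple of the identity. But by construction $\omega(\psi)=\sum_{i=1}^{n}c_{i}\,\omega(\tau^{(i)}_{g}(x))\le\gamma$ for every $\omega\in\mathcal{S}(M,g)$, so the numerical range of $\psi$ is bounded above by $\gamma$ and in particular is not all of $\mathbb{R}$ --- contradicting (1). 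Hence $\mathfrak{T}_{g}(x)=\mathbb{R}^{n}$.

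The main obstacle, and the only place where care is genuinely needed, is the convexity/topology step: making the separation argument airtight requires knowing that $\mathfrak{T}_{g}(x)$ is a \emph{proper} convex subset with the interior-point property so that a genuine (not merely weak, or worse, trivial) separating functional exists, and handling the possibility that $\mathfrak{T}_{g}(x)$ is not closed. The linear-independence hypothesis on $\mathbf{T}_{g}(x)$ is exactly what guarantees full convex dimension $n$ and hence non-empty interior; that is the key input that prevents the separating functional from degenerating. Everything else --- the surjectivity of a non-zero linear functional on $\mathbb{R}^{n}$, the translation between tuples and expectation values of $\psi$, the fact that $\psi$ is not a scalar --- is immediate from the linear-independence of $\{\mathbbm{1},\tau^{(1)}_{g}(x),\dots,\tau^{(n)}_{g}(x)\}$.
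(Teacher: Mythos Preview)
Your approach matches the paper's: both directions are handled the same way, with (2)$\Rightarrow$(1) trivial and (1)$\Rightarrow$(2) by contradiction via a separating hyperplane. However, the step you flag as ``routine convex-geometry bookkeeping'' is exactly where your argument has a gap, and your proposed fix does not work. You suggest separating $p$ from the \emph{closure} of $\mathfrak{T}_{g}(x)$, ``which is still a proper convex subset'' --- but this is precisely what can fail: if $\mathfrak{T}_{g}(x)$ happens to be dense in $\mathbb{R}^{n}$ without being all of $\mathbb{R}^{n}$, its closure is $\mathbb{R}^{n}$ and no hyperplane separates $p$ from it. A priori nothing you have assumed rules this out.

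The paper handles this by an explicit case distinction. If $\overline{\mathfrak{T}_{g}(x)}\subsetneq\mathbb{R}^{n}$, a support hyperplane exists and the contradiction goes through as you wrote. If instead $\mathfrak{T}_{g}(x)$ is dense, the paper shows directly that density together with non-empty interior and convexity force $\mathfrak{T}_{g}(x)=\mathbb{R}^{n}$: given $a\notin\mathfrak{T}_{g}(x)$ and an open $O\subset\mathfrak{T}_{g}(x)$, the reflected open set $a-(O-a)=2a-O$ meets $\mathfrak{T}_{g}(x)$ in some point $d$ by density, whence $a=\tfrac{1}{2}(d+b)$ for some $b\in O$, contradicting convexity. (An alternative single-shot fix: apply Hahn--Banach to separate $\{p\}$ from the \emph{open} set $\mathfrak{T}_{g}(x)^{\circ}$, then use the standard fact $\mathfrak{T}_{g}(x)\subset\overline{\mathfrak{T}_{g}(x)^{\circ}}$ for convex sets with non-empty interior to extend the inequality. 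Either way, non-empty interior --- guaranteed by linear independence of $\mathbf{T}_{g}(x)$ --- is the genuine input, not closedness.)
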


\begin{proof}
For notational reasons, we formulate the argument as if $\mathbf{T}_{g}(x)$ did not contain the unit $\mathbbm{1}$\,. However, adding $\mathbbm{1}$ is completely harmless and doesn't change our conclusions.

The following argument appears in the proof of \cite[Theorem 5]{HollWald08}. We will repeat it here for the convenience of the reader, filling in some details.

$\text{Item \ref{one}} \Rightarrow \mathfrak{T}_{g}(x)=\mathbb{R}^{n}$\,: We argue by contradiction. Assume that $\mathfrak{T}_{g}(x)$ is a proper subset of $\mathbb{R}^{n}$\,. As was noted before, $\mathfrak{T}_{g}(x)$ has non-empty interior in $\mathbb{R}^{n}$\,: $\mathfrak{T}_{g}(x)^{\circ}\neq\emptyset$\,.

We distinguish two cases. The first is that even the closure of $\mathfrak{T}_{g}(x)$ is a proper subset of $\mathbb{R}^n$. In this case we can find a support hyperplane (an affine space of dimension $n-1$ that touches the boundary of $\mathfrak{T}_{g}(x)$) which may be written as $\{v\in\mathbb{R}^n~|~L(v)=\alpha\}$ for some linear functional $L$ on $\mathbb{R}^n$ and some $\alpha\in\mathbb{R}$\,. If the components of $L$ in the standard basis of $\mathbb{R}^n$ are given by $(l_{1},\dots,l_{n})$\,, we find:
\begin{displaymath}
 l_{1}\,\omega(\tau_{g}^{(1)}(x))+\dots+l_{n}\,\omega(\tau_{g}^{(n)}(x))\leq\alpha
\end{displaymath}
for all $\omega\in\mathcal{S}(M,g)$\,. Hence, the numerical range of $l_{1}\,\tau_{g}^{(1)}(x)+\dots+l_{n}\,\tau_{g}^{(n)}(x)$ cannot be all of $\mathbb{R}$ - a contradiction.

The other case is when the closure of $\mathfrak{T}_{g}(x)$ is all of $\mathbb{R}^n$, i.e.\  when $\mathfrak{T}_{g}(x)$ is dense in $\mathbb{R}^n$. We show that in this case $\mathfrak{T}_{g}(x)$ is already all of $\mathbb{R}^n$\,. Assume the contrary and take an element $a\in\mathbb{R}^n$ such that $a\notin\mathfrak{T}_{g}(x)$. As $\mathfrak{T}_{g}(x)$ has non-empty interior, we may choose an open subset $O\subset\mathfrak{T}_{g}(x)$\,. Then $a-(O-a)$ is also open and contains some $d\in\mathfrak{T}_{g}(x)$\,, because $\mathfrak{T}_{g}(x)$ is dense. But then $d=a-(b-a)$ for some $b\in O$ and hence $a=\frac{1}{2}(d+b)$\,. Since $b,d\in\mathfrak{T}_{g}(x)$ by construction, we conclude that $a\in\mathfrak{T}_{g}(x)$ due to convexity - a contradiction.

The other direction, $\mathfrak{T}_{g}(x)=\mathbb{R}^{n}\Rightarrow\text{Item \ref{one}}$\,, is trivial and therefore omitted.
\end{proof}

To summarize: we found that in case our thermal observables span a space of "unbounded" linear forms, there exist states which are in LTE with regard to these observables. In the appendix, we will sketch how this observation can be used to show existence of LTE states in the example of the Klein Gordon field discussed previously.

\section{Conclusion}

As we have shown, the existence of LTE states can be discussed without reference to any particular model. However, we wish to point out some open questions.

In our analysis of LTE on Minkowski spacetime, we rely on the fact that the spaces of thermal observables under consideration are finite-dimensional. It would be desirable to have an argument which does not need this assumption, since one generally needs an infinite set of local thermal observables in order to attach interesting thermal functions like an entropy density to LTE states. Moreover, the regions of thermality we consider are compact and it is interesting to clarify whether this is the best one can do in this general framework.

We also pointed out some of the conceptual problems in defining LTE on curved spacetime (mainly concerning the definition of thermal observables) and gave a definition that is suited to cope with these problems. Moreover, we presented an existence result concerning \emph{pointwise} LTE states on curved spacetime. An extension of our argument to show existence of states which are in LTE in open regions of curved spacetime would be of great interest. The methods we presented may provide a first step in a proof of existence of such states.

\section*{Acknowledgments}

I am indebted to Prof.\ D.\ Buchholz for suggesting this problem and his help. Moreover, I would like to thank Dr.\ K.\ Sanders for helpful remarks and interesting conversations regarding QFT on curved spacetime. Financial support by the Graduiertenkolleg $1493$ ``Mathematische Strukturen in der modernen Quantenphysik'' is gratefully acknowledged.

\begin{appendix}
\section*{Appendix}\label{ScalingKGE}
\setcounter{section}{1}

In this appendix we wish to continue the study of $\mathbf{T}^{(2)}_{g}(x)$-thermal states for the (massless) Klein Gordon field, as introduced in section \ref{Curved}. In particular we want to sketch how Observation \ref{main} can be used to establish existence of such states.

By Proposition \ref{UnboundAss} we only need to check whether $\mathrm{span}_{\mathbb{R}}\{\mathbf{T}^{(2)}_{g}(x)\}$ consists of unbounded fields (apart from those that are multiples of $\mathbbm{1}$). We call a pointlike field \emph{unbounded} if its numerical range equals all of $\mathbb{R}$\,. Unboundedness of pointlike localized observables is something that makes sense physically: it ensures that these objects cannot be realized as proper self-adjoint operators, which is expected due to the uncertainty relations. Fewster \cite{Fews05} shows that a pointlike field is indeed unbounded if there exists a suitable scaling limit in the sense of \cite{FredHaag87} for the corresponding quantum field. We will now briefly recall this result.

Let $x\in M$ and let $(U,\kappa)$ be a chart around $x$\,. One may define a semi-group $\{\sigma_{\lambda}\}_{\lambda\in(0,1)}$ (depending on $\kappa$) of local diffeomorphisms that contract $U$ to the point $x$ if $\lambda\rightarrow 0^{+}$\,. The action of the $\sigma_{\lambda}$ is then extended to test functions on the $n$-th Cartesian power of $U$ via push-forward: $\sigma_{\lambda *}f:=f\circ\sigma^{-1}_{\lambda}$\,, $f\in C^{\infty}_{0}(U^{\times n})$\,, $\sigma_{\lambda *}f\equiv0$ outside of $U$.

Considering any scalar quantum field with corresponding hierarchy of $n$-point functions $\omega^{(n)}$ with regard to a state $\omega\in\mathcal{S}(M,g)$\,, $\omega$ is said to possess a \emph{scaling limit at $x\in M$} for the quantum field if there exists a (monotone) scaling function $N:(0,1]\rightarrow [0,\infty)$ such that the limit
\begin{equation}
 \hat{\omega}^{(n)}:=\lim_{\lambda\rightarrow 0^{+}}N(\lambda)^{n}\,\omega^{(n)}\circ\sigma_{\lambda *}
\end{equation}
exists as a distribution for all $n\in\mathbb{N}$ and is non-vanishing for some $n$. In \cite{SahlVerc01} scaling limits are generalized to vector valued quantum fields. The scaling function $N$ satisfies
\begin{equation}\label{NLambda}
 \lim_{\lambda'\rightarrow 0^{+}}\frac{N(\lambda\lambda')}{N(\lambda')}=\lambda^{\alpha}
\end{equation}
for some $\alpha\in\mathbb{R}$\,. The number $d:=4+\alpha$ is called \emph{dimension} of the field at $x$\,.

\begin{proposition}[Fewster \cite{Fews05}]\label{FewsterProp}
Let $\omega\in\mathcal{S}(M,g)$ possess a scaling limit for a quantum field $\phi$ at $x\in M$. Moreover, assume that the dimension of $\phi$ is positive and that the scaling limit two-point function $\hat{\omega}^{(2)}$ is non-trivial. Then the pointlike localized field $\phi(x)$ is unbounded.
\end{proposition}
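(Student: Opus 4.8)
My plan is to argue by contraposition: assuming the numerical range of $\phi(x)$ is a proper subset of $\mathbb{R}$, I will contradict the existence of a scaling limit with the stated properties. Since $\mathcal{S}(M,g)$ is convex and $\omega\mapsto\omega(\phi(x))$ is affine, this numerical range is a proper interval, hence bounded above or bounded below; replacing $\phi$ by $-\phi$ if necessary — which changes neither the dimension of the scaling limit nor the two-point function $\hat{\omega}^{(2)}$ — I may assume $b:=\sup_{\omega\in\mathcal{S}(M,g)}\omega(\phi(x))<\infty$. Put $\Psi:=(b+1)\mathbbm{1}-\phi$, a local and covariant quantum field with pointlike value $\psi:=\Psi(x)$, so that $\omega(\psi)=b+1-\omega(\phi(x))\ge1$ for \emph{every} $\omega\in\mathcal{S}(M,g)$. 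A short computation keeping track of the Jacobian factors $\lambda^{4}$ produced by $\sigma_{\lambda *}$ shows that adding a multiple of $\mathbbm{1}$ alters neither the scaling function $N$ nor the dimension — the extra terms scale against $N(\lambda)^{n}$ as $\lambda^{4}N(\lambda)\sim\lambda^{d}\cdot(\text{slowly varying})\to0$, by \eqref{NLambda} and $d>0$ — and that $\hat{\Psi}^{(2)}=\hat{\omega}^{(2)}$, which is non-trivial.

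Next I fix the smearing. The distribution $\hat{\Psi}^{(2)}$ is positive semidefinite (a limit of the forms $g\mapsto\Psi^{(2)}(\bar g\otimes g)\ge0$) and homogeneous under dilations; being non-trivial, it therefore pairs non-trivially with a real test function $f$ supported in an arbitrarily small neighbourhood of $x$, say $c:=\hat{\Psi}^{(2)}(f\otimes f)>0$. Write $f_{\lambda}:=\sigma_{\lambda *}f$. Using that $\mathcal{S}(M,g)$ is stable under the local excitations $\omega\mapsto\omega(A^{*}\,\cdot\,A)/\omega(A^{*}A)$ (standard for the Hadamard-type state spaces in play), I form, for $\lambda$ small (where $\omega(\Psi(f_{\lambda})^{2})=\Psi^{(2)}(\sigma_{\lambda *}(f\otimes f))=c\,N(\lambda)^{-2}(1+o(1))>0$),
\begin{equation*}
 \omega_{\lambda}\ :=\ \bigl(\omega(\Psi(f_{\lambda})^{2})\bigr)^{-1}\,\omega\bigl(\Psi(f_{\lambda})\,\cdot\,\Psi(f_{\lambda})\bigr)\ \in\ \mathcal{S}(M,g)\,,
\end{equation*}
where $\omega$ is the reference state possessing the scaling limit.

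The crux is the scaling of $\omega_{\lambda}(\psi)$. Because the contracting maps $\sigma_{\lambda}$ fix $x$, one has $\sigma_{\lambda *}\delta_{x}=\delta_{x}$, so the three-point function of $\Psi$ with its middle argument pinned at $x$ scales exactly as an ordinary three-point function:
\begin{equation*}
 \omega_{\lambda}(\psi)\ =\ \frac{\Psi^{(3)}\bigl(\sigma_{\lambda *}(f\otimes\delta_{x}\otimes f)\bigr)}{\Psi^{(2)}\bigl(\sigma_{\lambda *}(f\otimes f)\bigr)}\ \sim\ \frac{\hat{\Psi}^{(3)}(f\otimes\delta_{x}\otimes f)}{c}\;N(\lambda)^{-1}\qquad(\lambda\to0^{+})\,.
\end{equation*}
Positivity of the dimension gives $N(\lambda)^{-1}\to0$ (at least for $d<4$, by \eqref{NLambda}), hence $\omega_{\lambda}(\psi)\to0$, contradicting $\omega_{\lambda}(\psi)\ge1$; this proves the Proposition. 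For $d\ge4$, where $N(\lambda)$ no longer diverges, one argues instead with the dilation covariance of the scaling-limit theory: one-sided boundedness of $\phi(x)$ forces its scaling limit $\hat{\phi}(0)$ — which sits at the dilation fixed point and transforms with a non-zero weight — to vanish, whence $\hat{\phi}\equiv0$ by translation covariance and $\hat{\omega}^{(2)}\equiv0$, once more a contradiction.

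The step I expect to be the main obstacle is the rigorous treatment of the mixed $n$-point functions $\Psi^{(n)}(\cdots\otimes\delta_{x}\otimes\cdots)$: one must (i) give them meaning by restricting the Hadamard $n$-point functions to have an argument at $x$, which is legitimate since the microlocal spectrum condition keeps their wavefront sets off the conormal of $\{y_{j}=x\}$, and (ii) show that the scaling limit respects this restriction, i.e.\ that $N(\lambda)^{n}\,\Psi^{(n)}\circ\sigma_{\lambda *}\to\hat{\Psi}^{(n)}$ holds in a topology of distributions with controlled wavefront set strong enough to pair with $f\otimes\delta_{x}\otimes f$. The remaining ingredients — that the pointlike fields are genuine linear forms on $\mathcal{S}(M,g)$, closedness of $\mathcal{S}(M,g)$ under local excitations, and the Jacobian bookkeeping that makes $\Psi$ inherit the scaling data of $\phi$ — are routine but must be made explicit.
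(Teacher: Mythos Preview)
The paper does not actually prove this proposition: its entire ``proof'' is a citation to Fewster \cite{Fews05}, with the remark that Fewster shows unboundedness from below and that a slight modification of his argument gives unboundedness from above. So there is no in-paper proof to compare against; you have instead attempted to reconstruct (a version of) Fewster's argument.

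Your strategy---excite the reference state by $\Psi(f_\lambda)$ with $f_\lambda$ shrinking to $x$, then evaluate $\Psi(x)$ in the resulting state and let the scaling limit drive the expectation to zero---is indeed in the spirit of the quantum-energy-inequality arguments Fewster develops. The reduction to one-sided boundedness and the shift $\Psi=(b+1)\mathbbm{1}-\phi$ are clean, and your bookkeeping of the $N(\lambda)$ factors in numerator and denominator is correct at the formal level.

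However, the step you yourself flag as ``the main obstacle'' is a genuine gap, not a routine technicality. The scaling limit in the sense used here is a statement about \emph{smeared} $n$-point functions: $N(\lambda)^n\,\omega^{(n)}\circ\sigma_{\lambda*}\to\hat\omega^{(n)}$ as distributions on $C_0^\infty(U^{\times n})$. You need the much stronger assertion that this convergence survives restriction of one argument to the fixed point $x$, i.e.\ pairing with $f\otimes\delta_x\otimes f$. Neither the microlocal-spectrum-condition argument you sketch nor the hypotheses of the proposition supply this; in particular, even if each $\Psi^{(3)}(f_\lambda\otimes\delta_x\otimes f_\lambda)$ is individually well defined, you have not shown that the limit $\hat\Psi^{(3)}(f\otimes\delta_x\otimes f)$ exists and is finite (and if it fails to be finite, your asymptotic $\sim N(\lambda)^{-1}\cdot(\text{finite})$ collapses). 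Your treatment of the case $d\ge 4$ is also only a sketch: the claim that one-sided boundedness of $\phi(x)$ forces $\hat\phi(0)=0$ via dilation weight needs an argument that the scaling limit \emph{states} inherit the bound, which you have not established. In short, the outline is reasonable and close to what Fewster does, but as written it does not constitute a proof; since the paper itself merely cites \cite{Fews05}, the honest resolution is to consult that reference for the precise mechanism by which the pointwise evaluation is controlled.
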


\begin{proof}
 In \cite{Fews05} it is shown that under the stated circumstances, $\phi(x)$ is unbounded ``from below''. However, a slight modification of the argument can be used to infer that unboundedness from above is also implied. 
\end{proof}

In the case of the Wick powers of the scalar fields on some spacetime $(M,g)$\,, we have:

\begin{lemma}\label{Unbounded}
 For all $n\in\mathbb{N}$, the scaling limit of the quasi-free Hadamard state $\omega$ at $x\in M$ exists for each of the the quantum fields $\wick{(\nabla^{o(1)}\phi)(\nabla^{o(2)}\phi)\dots (\nabla^{o(n)}\phi)}$\,. The scaling limit two-point function is non-trivial and the dimension of the field is positive.
\end{lemma}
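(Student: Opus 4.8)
The plan is to compute the scaling limit of a quasi-free Hadamard state directly from the Hadamard short-distance expansion of its two-point function, and then to propagate the result to the composite fields $\wick{(\nabla^{o(1)}\phi)\cdots(\nabla^{o(n)}\phi)}$ via Wick's theorem. First I would fix a chart $(U,\kappa)$ around $x$ and the associated dilations $\sigma_\lambda$, and recall that in such a chart the Hadamard parametrix has the local form $H(y,y')=\frac{u(y,y')}{\sigma(y,y')}+v(y,y')\log\sigma(y,y')$ with $u(x,x)=1$, $\sigma$ the squared geodesic distance (so $\sigma\sim\eta_{\mu\nu}(y-y')^\mu(y-y')^\nu$ to leading order), and that $\omega^{(2)}=H+W$ with $W$ smooth. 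Under $\sigma_{\lambda*}$ the distribution $\omega^{(2)}\circ\sigma_{\lambda*}$ picks up its leading behaviour from the $1/\sigma$ term, which scales like $\lambda^{-2}$ in four dimensions; the smooth remainder $W$ and the $v\log\sigma$ term are subleading. Hence with the scaling function $N(\lambda)=\lambda$ one gets $\lim_{\lambda\to0^+}\lambda^2\,\omega^{(2)}\circ\sigma_{\lambda*}=\hat\omega^{(2)}$, which is precisely the massless flat-space vacuum two-point function (the $1/\sigma$ kernel with $\sigma$ replaced by its flat quadratic leading part), and this is manifestly non-trivial. This fixes $\alpha=-2$, hence $d=4+\alpha=2>0$ for $\phi$ itself; more generally each covariant derivative $\nabla_\mu$ acting on a factor lowers the scaling degree by one, so for $\wick{(\nabla^{o(1)}\phi)\cdots(\nabla^{o(n)}\phi)}$ one obtains a positive dimension $d=2n+o(1)+\cdots+o(n)-4$ wait — more carefully, the scaling function for the $n$-fold product is $N(\lambda)^n=\lambda^n$ and the geometric dimension of the differentiated product is $n+(o(1)+\cdots+o(n))>0$, so in all cases the dimension is strictly positive.

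Next I would treat the composite fields. A quasi-free Hadamard state has all its $n$-point functions determined by $\omega^{(2)}$ through the Wick expansion, and the Wick powers $\wick{\,\cdot\,}$ are defined (following \cite{HollWald01}) by subtracting the Hadamard parametrix $H$ rather than a state-dependent piece; the difference $w:=\omega^{(2)}-H$ is smooth. The $2k$-point function of a product of $k$ copies of $\wick{(\nabla^{o(1)}\phi)\cdots(\nabla^{o(n)}\phi)}$ is then a finite sum over pairings, each term a product of derivatives of $w$ (for pairs within the same Wick monomial, these are the only contributions since the coincident-point singularity is removed) and derivatives of $\omega^{(2)}$ for pairs linking different monomials. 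Applying $\sigma_{\lambda*}$ and multiplying by $N(\lambda)$ to the appropriate power, only the pairings that link distinct monomials by the singular $1/\sigma$ kernel survive in the limit, and they reproduce exactly the corresponding scaling-limit correlation functions of the differentiated massless free field on Minkowski space. In particular the scaling limit two-point function of $\wick{(\nabla^{o(1)}\phi)\cdots(\nabla^{o(n)}\phi)}$ is the fully contracted (one pairing per index slot) flat-space expression, which is a non-zero multiple of a derivative of a power of the massless kernel — non-trivial.

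The main obstacle I expect is bookkeeping rather than conceptual: one must verify that the subtracted smooth pieces $w$ and $v\log\sigma$, together with the curvature corrections hidden inside $\sigma$, $u$ and $v$, really are subleading under all the relevant scalings simultaneously, and that this survives after applying up to $o(i)$ covariant derivatives to each factor — covariant derivatives introduce Christoffel symbols which are themselves $O(\lambda)$ under $\sigma_\lambda$ (they vanish at $x$ in suitable coordinates), so they do not spoil the degree count, but this needs to be checked uniformly. A secondary subtlety is that the scaling-limit two-point function must be shown to be genuinely non-vanishing after all index contractions in the composite case, i.e.\ that no algebraic cancellation occurs; this follows because the flat massless kernel and its derivatives are real-analytic and not identically zero, so a finite linear combination of its derivatives appearing with positive combinatorial coefficients cannot vanish identically. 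Granting these estimates, the hypotheses of Proposition \ref{FewsterProp} are met for every field in $\mathbf{F}_g(x)$, and in particular for the generators of $\mathrm{span}_\mathbb{R}\{\mathbf{T}^{(2)}_g(x)\}$, which is what the appendix needs.
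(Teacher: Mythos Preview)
Your strategy is exactly the paper's: compute the $n$-point functions of the Wick monomials via the multiplication law of \cite{HollWald01}, feed in the scaling behaviour of the Hadamard parametrix (for which the paper simply cites \cite[Lemma A.3]{SahlVerc01}), and read off a positive dimension. The paper's proof is in fact shorter than your sketch---it omits precisely the curvature/Christoffel bookkeeping and the non-cancellation remarks that you flag as ``obstacles''.

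One slip to fix before you write it up: your numbers are internally inconsistent and disagree with the paper's convention. With $\sigma_{\lambda*}f:=f\circ\sigma_\lambda^{-1}$ (no Jacobian), the change of variables in $\omega^{(2)}\circ\sigma_{\lambda*}$ contributes a factor $\lambda^{8}$ on top of the $\lambda^{-2}$ from the $1/\sigma$ kernel, so for $\phi$ one needs $N(\lambda)=\lambda^{-3}$, i.e.\ $\alpha=-3$ and $d=1$; more generally the paper states $N(\lambda)=\lambda^{n-4}$ for $\wick{\phi^{n}}$, giving $d=n$, with each covariant derivative raising $d$ by one. Your claim ``$N(\lambda)=\lambda$, hence $\alpha=-2$'' is wrong on both counts (and the two claims contradict each other via \eqref{NLambda}), though your final answer $d=n+\sum_i o(i)$ for the differentiated monomial is correct.
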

\begin{proof}
 One starts by computing the $n$-point functions of the fields
\begin{equation*}
\wick{(\nabla^{o(1)}\phi)(\nabla^{o(2)}\phi)\dots (\nabla^{o(n)}\phi)}
\end{equation*}
with respect to $\omega$\,, making use of the multiplication law of these fields (which mimics ``Wick's theorem'' \cite[eqns.\ (7) and (8)]{HollWald01}). Then, one only needs to know the scaling limit of the Hadamard parametrix, which follows from \cite[Lemma A.3]{SahlVerc01}. One finds, for example, that the scaling function for the Wick powers $\wick{\phi^{n}}(x)$ is given by $N(\lambda)=\lambda^{n-4}$, i.e.\ the dimension $d$ equals $n$. Derivatives only increase the dimension. We omit the details.
\end{proof}

Hence, all elements of $\mathbf{T}^{(2)}_{g}(x)$ except multiples of $\mathbbm{1}$ are unbounded. However, in order conclude that $\mathbf{T}^{(2)}_{g}(x)$-thermal states exist, one needs to make sure that all elements in $\mathrm{span}_{\mathbb{R}}\{\mathbf{T}^{(2)}_{g}(x)\}$ (except multiples of $\mathbbm{1}$) are unbounded. Therefore, one also has to investigate the scaling limit of linear combinations of elements of $\mathbf{T}^{(2)}_{g}(x)$\,. The only interesting question in this regard is whether the scaling limit two-point function of such a sum of fields is always non-trivial. This can fail only if there are cancellations in the scaling limit appearing among several fields that share the highest dimension in the given sum. In our case this concerns in particular the components of the thermal energy tensor. If one found that no such cancellations appear - which seems plausible -, one would be able to conclude that the required ``unboundedness properties'' are satisfied. This, by Proposition \ref{UnboundAss} and Observation \ref{main}, implies that there are states in $\mathcal{S}(M,g)$ which are $\mathbf{T}^{(2)}_{g}(x)$-thermal. These states are Hadamard by the definition of $\mathcal{S}(M,g)$\,.
\end{appendix}

\bibliography{LTE_Bibliography}

\end{document}